\newtheorem{theorem}{Theorem}[section]
\newtheorem{lemma}[theorem]{Lemma}
\newtheorem{conjecture}[theorem]{Conjecture}
\newcommand\blfootnote[1]{%
  \begingroup
  \renewcommand\thefootnote{}\footnote{#1}%
  \addtocounter{footnote}{-1}%
  \endgroup
}
\title{A Note on Empty Balanced Tetrahedra in Two colored Point sets in $\mathbb{R}^3$}
\author{Jos\'e M. D\'iaz-Ba\~nez\thanks{Department of Applied Mathematics II, University of Seville, Camino de los Descubrimientos s/n, Seville 41092, Spain. \tt{dbanez@us.es}} 
\and Ruy Fabila-Monroy\thanks{Departamento de Matem\'aticas, CINVESTAV. CDMX, Mexico. Partially supported by Conacyt of Mexico, Grant 253261. \tt{ruyfabila@math.cinvestav.edu.mx}} 
\and Jorge Urrutia \thanks{Instituto de Matem\'aticas, Universidad Nacional Aut\'onoma de M\'exico (UNAM), CDMX, Mexico. \tt{urrutia@matem.unam.mx}. }}
\begin{document}

\maketitle
\blfootnote{\begin{minipage}[l]{0.3\textwidth} \includegraphics[trim=10cm 6cm 10cm 5cm,clip,scale=0.15]{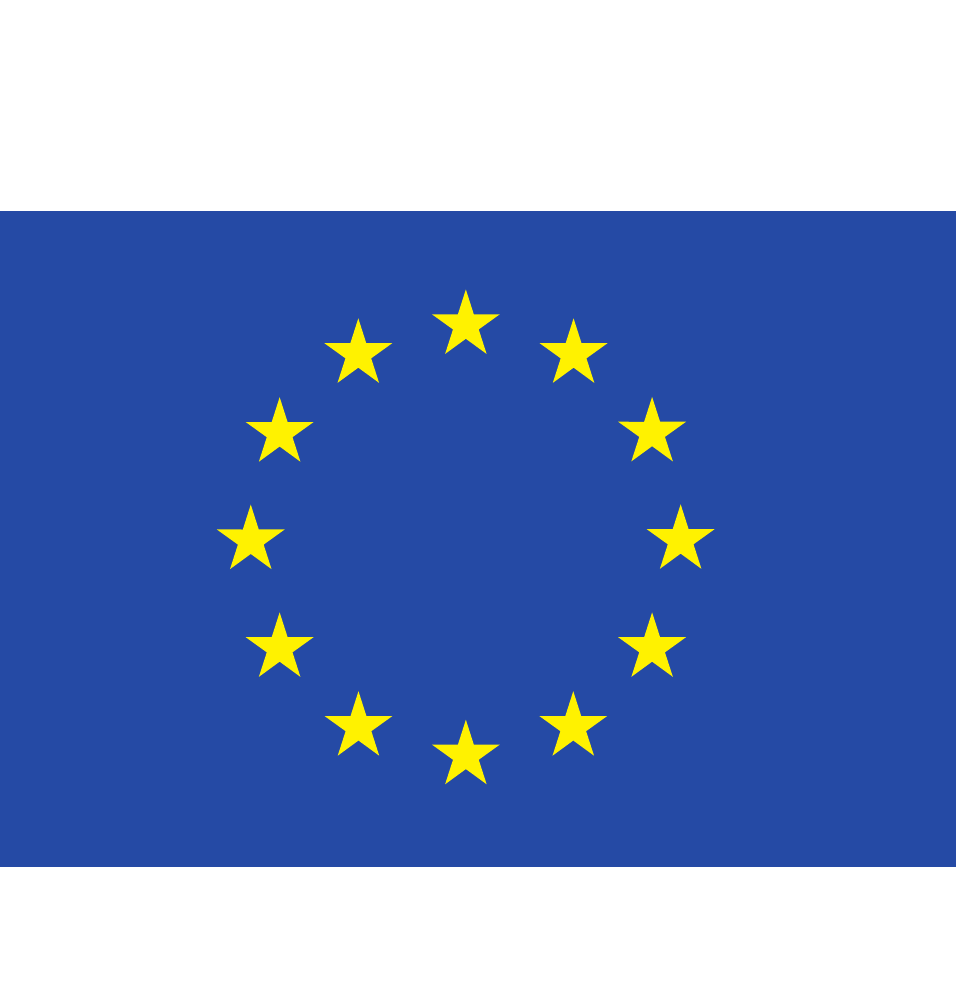} \end{minipage}  \hspace{-2.cm} \begin{minipage}[l][1cm]{0.82\textwidth}
 	  This project has received funding from the European Union's Horizon 2020 research and innovation programme under the Marie Sk\l{}odowska-Curie grant agreement No 734922.
 	\end{minipage}}

\begin{abstract}

Let $S$ be a set of $n$ red and $n$ blue points in general position in $\mathbb{R}^3$. Let 
$\tau$ be a tetrahedra with vertices on $S$. We say that $\tau$ is \emph{empty}
if it does not contain any point of $S$ in its interior. We say that 
$\tau$ is \emph{balanced} if it contains two blue vertices and two red vertices. 
In this paper we show that $S$ spans $\Omega(n^{5/2})$ empty balanced tetrahedra. 
\end{abstract}

\section{Introduction}

The study of \emph{holes} in point sets in $\mathbb{R}^2$ and $\mathbb{R}^3$
was started many years ago in a seminal papers of Paul Erd\H{o}s and George Szekeres~\cite{ErdosSek,erdos1978some}. A list of some of the papers studying 
the existence of holes in point sets, as well as variants on colored point sets
is given at the end of our paper.
\nocite{gulik_4_gon,friedman_4_gon,brass_4_gon,clemens_4_gon,balanced_6_holes,almost_quad,balanced_4_holes,non_convex_quad,
mono_us,mono_pach,almost,variants, empty_simp,onalmost,nicolas2007empty,gerken2008empty}

All point sets $S$ considered in this paper will be assumed to be in general position, and to have $n$ red, and $n$ blue points. We will call them balanced point sets.

A balanced $4$-hole of a balanced point set $S$ on the plane is a polygon $P$, not necessarily convex, whose vertices are in $S$, and having two red, and two blue vertices; in addition $P$ contains no extra
elements of $S$ on its interior.
Balanced $4$-holes for point sets on the plane were studied in~\cite{balanced_4_holes}.
They proved that any balanced point set on the plane always has a quadratic number
of not necessarily convex balanced $4$-holes, and that there are balanced point sets with no
\emph{convex} $4$-holes. In this paper we extend these results to $\mathbb{R}^3$.

Let $S$ be a balanced point set in general position in $\mathbb{R}^3$. Let 
$\tau$ be a tetrahedra with vertices on $S$. We say that $\tau$ is \emph{empty}
if it does not contain any point of $S$ in its interior. We say that 
$\tau$ is \emph{balanced} if it contains two blue and two red vertices. 
In this paper we show that $S$ spans $\Omega(n^{5/2})$ empty balanced tetrahedra. 

To show our lower bound we consider a similar problem in the plane. Suppose that
$S$ is now a set of $n$ red and at least $n-1$ blue points in general position
in the plane. We say that a triangle $\tau$ with vertices in $S$ is a \emph{red}
$(2,1)$-triangle if two of its vertices are red and one of them is blue. Blue 
$(2,1)$-triangles are defined in a similar way.
We show that $S$ spans $\Omega(n^{3/2})$ empty red $(2,1)$-triangles. 

\section{Lower Bounds}
\begin{lemma}\label{lem:triangles}
 Every set of $S$ of $n$ red and at least $n-1$ blue points in general position in the plane
 spans $\Omega(n^{3/2})$ empty red $(2,1)$-triangles.
\end{lemma}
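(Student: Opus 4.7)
I plan to lower-bound the number of empty red $(2,1)$-triangles by classifying them according to their unique blue vertex. For each blue $b$, sort the $n$ red points angularly around $b$ into a cyclic sequence $r_1,\dots,r_n$. For every consecutive pair $(r_i,r_{i+1})$, the open wedge at $b$ bounded by the rays $br_i$ and $br_{i+1}$ contains no red, so the triangle $br_ir_{i+1}$ is automatically empty of reds and is therefore a \emph{candidate} red $(2,1)$-triangle. Each blue thus contributes $n$ such candidates, giving order $n^2$ candidates in total.

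Next I bound the number of candidates that are blocked by some interior blue. A key observation is that each other blue $b'\neq b$ lies in a unique angular wedge around $b$, so it can block at most one candidate at $b$, namely the one whose wedge contains $b'$ (and only if $b'$ is closer to $b$ than the chord $r_ir_{i+1}$ of that wedge). Summing over ordered pairs of blues gives the straightforward estimate of at most $(n-1)(n-2)$ blockings, leaving at least $2(n-1)$ unblocked candidates. This elementary count only yields $\Omega(n)$ empty triangles and falls short of the target.

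To reach $\Omega(n^{3/2})$, I would extend the class of candidates to include non-consecutive pairs $br_ir_j$, noting that such a triangle is empty of reds precisely when every red angularly between $r_i$ and $r_j$ around $b$ lies beyond the chord $r_ir_j$. Combined with a Cauchy-Schwarz or incidence-type estimate applied to the bipartite ``blue-blocks-candidate'' relation, one should boost the average per-blue contribution from $\Omega(1)$ to $\Omega(\sqrt{n})$, which multiplied by the $\Theta(n)$ blues gives the desired bound. An alternative route is a ham-sandwich style recursive decomposition of $S$, in which the cross-boundary contribution provides the missing $\sqrt{n}$ factor.

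\textbf{Main obstacle.} The hardest step is quantitatively upgrading $\Omega(n)$ to $\Omega(n^{3/2})$. The naive blocking estimate is loose in two specific ways: it ignores candidates coming from non-consecutive red pairs around a blue, and it overcounts by treating every ordered blue pair as a potential blocking, even when the blocking blue actually lies beyond the relevant chord and hence outside the candidate triangle. Extracting the extra $\sqrt{n}$ factor in a rigorous way requires a careful incidence or variance estimate controlling how blockings distribute across candidates, and I expect this to be the technical crux of the proof.
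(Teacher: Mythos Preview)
Your proposal does not yet prove the lemma: the consecutive-pair/blocking count gives only $\Omega(n)$ empty red $(2,1)$-triangles, and the passage to $\Omega(n^{3/2})$ is left as an intention rather than an argument. Both suggested upgrades are problematic. Enlarging the candidate set to non-consecutive pairs $b r_i r_j$ destroys the one feature that made your count work, since a single blue $b'$ can now lie inside many such triangles and you lose the ``at most one blocking per ordered blue pair'' bound with no replacement. And the generic invocation of Cauchy--Schwarz or an incidence estimate has no identified bilinear form or incidence structure to apply it to; you have not said what two quantities would be compared to manufacture the missing $\sqrt{n}$ factor, nor why a ham-sandwich recursion would gain anything over the flat count.

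The paper avoids this difficulty by pivoting around a \emph{red} point $p$ rather than a blue one, and by a threshold argument. Sort $S\setminus\{p\}$ angularly around $p$ and let $m$ be the number of maximal runs of consecutive red points. If $m>\sqrt{n}$, there are at least $2m-2\ge 2\sqrt{n}-2$ red--blue adjacencies in this cyclic order, each giving an empty red $(2,1)$-triangle with apex $p$; repeating for every red $p$ already yields $\Omega(n^{3/2})$. If $m\le\sqrt{n}$, the runs $I_i$ are long on average; for each run the paper takes a convex subinterval $J_i$, adds the nearest blue points one at a time, and extracts $\Omega(|I_i|^2)$ empty red $(2,1)$-triangles, so that in total $\sum_i |I_i|^2 \ge (n-1)^2/m = \Omega(n^{3/2})$. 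The idea your plan is missing is precisely this $\sqrt{n}$ threshold on the number of monochromatic red runs around a fixed red vertex, which is what converts a linear count into an $n^{3/2}$ one.
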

\begin{proof}
 Let $p$ be a red point of $S$. Sort the points of $S\setminus \{p\}$ clockwise
 by angle around $p$. Let $I_1,\dots,I_m$ be the maximal intervals of consecutive
 red points of $S \setminus \{p\}$ in this order.
 
 Suppose that $m \le \sqrt{n}$. 
 For every $1 \le i \le m$ let $J_i$ be a subinterval of $I_i$ of 
at least $|I_i|/2$ points such that $p \notin \operatorname{Conv}(J_i)$.
Note that $\operatorname{Conv}(J_i) \cap S = J_i$.

Let $q_1,\dots,q_{n'}$ be the blue points in $S$ sorted by distance to $\operatorname{Conv}(J_i)$.
For every $1 \le j \le |I_i|-1 $ let  $P_j:=\operatorname{Conv}(J_i\cup \{q_1,\dots,q_j\}) \cap S$.
Sort the points of $P_j \setminus \{q_j\}$ clockwise by angle around $q_j$. 
Note that every pair of a red point in $S$ and a red point in $J_i$, consecutive in this order, defines an empty red $(2,1)$-triangle.
This triangle contains $q_j$ and a point in $J_i$ as its vertices. Since there are $j-1$ blue points different from $q_j$ in $P_i$ , there are at least $|J_i|-j$ such triangles. 
Therefore, $S$ spans at least \[\sum_{j=1}^{|J_i|-1} |J_i|-j=\frac{|J_i|(|J_i|-1)}{2}\ge \frac{|I_i|^2}{8}-\frac{|I_i|}{4}.\]
red $(2,1)$-triangles containing a vertex of $I_i$. By considering all such triangles for every $I_i$ we overcount each triangle
at most two times. We have at least
\[\frac{1}{2}\sum_{i=1}^{m} \left (\frac{|I_i|^2}{8}-\frac{|I_i|}{4} \right ) \ge \frac{(n-1)^2}{16m}-\frac{(n-1)}{8}\ge \frac{1}{32}n^{\frac{3}{2}},\]
empty red $(2,1)$-triangles. The first inequality comes from the fact that this number is minimized when all $I_i$ have the same cardinality;
the last inequality holds for $n \ge 22$. 

Suppose that $m > \sqrt{n}$. Every endpoint of an $I_i$, with the exception of at most two, has a consecutive blue point. These pair
and $p$ define an empty red $(2,1)$-triangle. Thus we have at least \[2\sqrt{m}-2 \ge 2\sqrt{n},\]
 empty red $(2,1)$-triangles containing $p$ as a vertex. 

By repeating this argument for every red point either  we get a total $\frac{1}{32}n^{\frac{3}{2}}$ empty red $(2,1)$-triangles, or for every
red point $p$ we get $2\sqrt{n}$ empty red $(2,1)$-triangles containing $p$  as a vertex. The result follows. 
\end{proof}

\begin{theorem}
 Let $S$ be a set of $n$ red and $n$ blue points in general position in $\mathbb{R}^3$.
 Then $S$ spans $\Omega(n^{5/2})$ empty balanced tetrahedra.
\end{theorem}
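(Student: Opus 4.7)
The plan is to show that for each blue point $b \in S$ there are $\Omega(n^{3/2})$ empty balanced tetrahedra having $b$ as a vertex. Summing over the $n$ blue points of $S$ and dividing by two, since every empty balanced tetrahedron has exactly two blue vertices, this will yield $\Omega(n^{5/2})$ empty balanced tetrahedra overall. The reduction to the planar setting is via central projection from $b$, after which Lemma~\ref{lem:triangles} can be applied.

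Fix a blue point $b \in S$. First I would use a Borsuk--Ulam argument (parameterising the planes through $b$ by their unit normal in $S^2$ and invoking the theorem on the odd map that sends a direction to the pair of signed red and blue excesses across the corresponding plane) to produce a plane $\Pi$ through $b$ that meets $S$ only at $b$ and such that each of its two open half-spaces contains at least $\lfloor n/2\rfloor$ red and at least $\lfloor (n-1)/2\rfloor$ blue points of $S \setminus \{b\}$. Fix one such half-space $H$, let $S_H := S \cap H$, choose an auxiliary plane $P$ parallel to $\Pi$ and strictly inside $H$, and let $\phi : H \to P$ denote the central projection from $b$. Since $S$ is in general position in $\mathbb{R}^3$ (no four points coplanar), the image $S' := \phi(S_H)$ is in general position in $P$: three collinear points of $S'$ would lift to three points of $S_H$ coplanar with $b$. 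Applying Lemma~\ref{lem:triangles} to $S'$, which contains $\lfloor n/2\rfloor$ red and at least $\lfloor n/2\rfloor - 1$ blue points, I obtain $\Omega(n^{3/2})$ empty red $(2,1)$-triangles in $P$.

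Each such planar triangle has vertices $\phi(v_1), \phi(v_2), \phi(v_3)$ with $v_1, v_2 \in S_H$ red and $v_3 \in S_H$ blue, and I claim that the lifted tetrahedron $\tau := \operatorname{Conv}\{b, v_1, v_2, v_3\}$ is empty and balanced. Balancedness is immediate: $\tau$ has two red vertices ($v_1, v_2$) and two blue ($v_3$ and $b$). For emptiness, note that the interior of $\tau$ lies in the open half-space $H$ (any interior point is a strict convex combination of $b\in\Pi$ and three points of $H$, so it has strictly positive signed distance to $\Pi$), and since $\Pi$ meets $S$ only at $b$, no point of $S$ on the opposite side of $\Pi$ can be in $\tau$; moreover, if some $p \in S_H \setminus \{v_1, v_2, v_3\}$ lay in the interior of $\tau$, then the ray from $b$ through $p$ would cross the triangle $v_1v_2v_3$ in its interior, forcing $\phi(p)$ to lie in the interior of $\phi(v_1)\phi(v_2)\phi(v_3)$ and contradicting the emptiness of this planar triangle. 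Hence $\tau$ is an empty balanced tetrahedron containing $b$.

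The main technical obstacle is the Borsuk--Ulam step that produces $\Pi$: one needs to simultaneously bisect two point sets by a plane constrained to pass through the fixed point $b$, which is the classical application of Borsuk--Ulam in $\mathbb{R}^3$ (the two-dimensional sphere of normals matches the two linear bisection constraints). Once $\Pi$ is available, the projection $\phi$ is arranged precisely so that emptiness of a planar triangle transfers to emptiness of the lifted tetrahedron essentially for free; summing the $\Omega(n^{3/2})$ empty balanced tetrahedra obtained for each of the $n$ blue points $b$ and dividing by the overcount factor of two then yields the claimed bound $\Omega(n^{5/2})$.
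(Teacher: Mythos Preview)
Your proof is correct and follows essentially the same strategy as the paper: for each blue point, pass a suitable plane through it, centrally project one side onto a parallel plane, apply Lemma~\ref{lem:triangles} there, lift the resulting empty red $(2,1)$-triangles to empty balanced tetrahedra, and sum over the $n$ blue points with an overcount factor of two. The only difference is that you invoke Borsuk--Ulam to bisect both colour classes simultaneously, whereas the paper more simply bisects only the red points by a plane through $b$ (an elementary argument) and then picks, by pigeonhole, the side that also contains at least $\lfloor n/2\rfloor$ of the remaining blue points.
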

\begin{proof}
For every blue point $p$ of $S$ we do the following. Let $\Pi$ be a plane
containing $p$ with the following properties. $\Pi$ contains no other point
of $S$;
at least $\lfloor n/2\rfloor$ of the red points of $S$
lie above $\Pi$ and at least $\lfloor n/2\rfloor$ of the red points of $S$
lie below $\Pi$. 
Without loss of generality assume that  above $\Pi$ there are at least $\lfloor n/2 \rfloor$ blue points of $S$.
Let $\Pi'$ be a plane parallel to $\Pi$ and above $p$. Let $S'$ be the points of $S$ above
$\Pi$. Let $q$ be a point in $S'$ above $\Pi$; let $r$ be the
infinite ray with apex $p$ and passing through $q$. Project $q$ to the point $r \cap \Pi'$. 
Let $S''$ be the image of these projections. By Theorem~\ref{lem:triangles}
$S''$ spans $\Omega(n^{3/2})$ empty red $(2,1)$-triangles. The preimages of the vertices
of these triangles together with $p$ define an empty balanced tetrahedra of $S$.
Since each such tetrahedra is counted at most twice we have at $\Omega(n^{5/2})$
empty balanced tetrahedra spanned by $S$. 
\end{proof}

We close our paper with two conjectures:

\begin{conjecture}
Let $S$ be a set of $n$ red and $n$ blue points in general position in $\mathbb{R}^3$
spans a cubic number of balanced tetrahedra.
\end{conjecture}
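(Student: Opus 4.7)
Under the natural reading that the conjecture concerns \emph{empty} balanced tetrahedra (the count being otherwise trivially quartic), the plan is to strengthen Lemma~\ref{lem:triangles} from $\Omega(n^{3/2})$ to $\Omega(n^2)$ empty red $(2,1)$-triangles in the plane, after which the central-projection argument of the preceding theorem immediately delivers $\Omega(n^3)$ empty balanced tetrahedra in $\mathbb{R}^3$. The target $\Omega(n^2)$ planar bound is essentially optimal, since every planar point set in general position has $\Theta(n^2)$ empty triangles in total, so only a constant factor could possibly be lost.

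To reach the planar $\Omega(n^2)$ bound, I would observe that the existing proof of Lemma~\ref{lem:triangles} already produces two \emph{disjoint} families of triangles at every red vertex $p$: the $\Omega(m)$ color-change triangles that contain $p$ as a vertex, and the $\Omega(n^2/m)$ sector triangles that do not, where $m=m(p)$ denotes the number of maximal monochromatic intervals in the angular order around $p$. Their union therefore contains $\Omega(m+n^2/m)=\Omega(n)$ empty red $(2,1)$-triangles at each $p$, by the arithmetic--geometric mean inequality. Lemma~\ref{lem:triangles} as stated forfeits this stronger per-vertex bound by committing, in each case, to only one of the two families.

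The hard step is to sum the per-vertex count $\Omega(n)$ over all $n$ red points without losing more than a constant factor to overcounting. Color-change triangles have multiplicity at most two in such a sum, but a sector triangle $\{r_1,r_2,b\}$ discovered for a witness $p$ might also be discovered for many other red vertices whose angular orderings place $r_1$ and $r_2$ inside a common red run. Bounding this multiplicity by $O(1)$ seems to require a geometric charging scheme---for example, assigning each empty red $(2,1)$-triangle to a canonical witness red vertex determined by a nearest-neighbor or incidence rule, or splitting the counts according to a well-chosen triangulation such as the Delaunay triangulation of the red points---so that each witness is charged only $O(1)$ times while every triangle receives at least one charge. This is the decisive step on which I expect most of the effort to fall; once it is in place, the projection argument lifts the improved planar lemma to $\Omega(n^3)$ empty balanced tetrahedra, settling the conjecture.
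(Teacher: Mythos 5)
This statement is one of the paper's two concluding \emph{conjectures}; the paper offers no proof of it, so your proposal is measured against an open problem rather than an existing argument. Your overall reduction is sound as far as it goes: reading the conjecture as being about \emph{empty} balanced tetrahedra is surely correct (otherwise the count is trivially $\Theta(n^4)$), and the lifting step works — if every set of $n$ red and roughly $n$ blue points in the plane spanned $\Omega(n^2)$ empty red $(2,1)$-triangles, then running the paper's projection argument from each blue point would indeed yield $\Omega(n^3)$ empty balanced tetrahedra, with only a factor-two overcount (one count per blue vertex of each tetrahedron). But that planar statement is precisely the paper's \emph{other} conjecture, which the authors also leave open; they only observe that the number of red plus blue $(2,1)$-triangles together is quadratic, and that combined bound is useless here, because projecting from a blue point specifically requires \emph{red} $(2,1)$-triangles in the image (blue ones would produce $(1,3)$-tetrahedra, not balanced ones). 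So what you have is a correct — and surely the intended — reduction of the first conjecture to the second, not a proof of either.

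The genuine gap is exactly where you place it, and it is not a routine detail. Your per-vertex count is essentially right, modulo one imprecision: the sector triangles number $\Omega(n^2/m)$ only when $m$ is at most a constant fraction of $n$ (if colors alternate around $p$, every red run is a singleton and there are no sector triangles at all), but in that regime the $\Omega(m)$ color-change triangles already give $\Omega(n)$, so the union is $\Omega(n)$ per red witness in every case. The unresolved issue is multiplicity: a sector triangle $\{r_1,r_2,b\}$ does not contain its witness $p$ as a vertex, and nothing in the construction of Lemma~\ref{lem:triangles} prevents the same triangle from being discovered by $\Theta(n)$ distinct red witnesses, in which case summing $\Omega(n)$ over all $n$ witnesses proves nothing beyond $\Omega(n)$. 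Your two suggestions for controlling this — a canonical witness chosen by a nearest-neighbor rule, or splitting the count along the Delaunay triangulation of the red points — are names for a solution rather than arguments: neither comes with a proof that every triangle receives a charge, nor that each witness is charged $O(1)$ times, and the difficulty of making such a scheme work is presumably why the authors state the quadratic planar bound as a conjecture rather than as a strengthening of their lemma. As it stands, your proposal establishes the implication ``quadratic planar bound $\Rightarrow$ cubic spatial bound'' and leaves both conjectures open.
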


\begin{conjecture}
Let $S$ be a set of $n$ red and $n$ blue points in general position in $\mathbb{R}^2$
has a quadratic number of red $(2,1)$-triangles.
\end{conjecture}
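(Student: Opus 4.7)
For each blue point $p\in S$ I will produce $\Omega(n^{3/2})$ empty balanced tetrahedra having $p$ as a vertex, by radially projecting one side of $S$ from $p$ onto an auxiliary plane and invoking Lemma~\ref{lem:triangles} on the two--coloured projection. Summing over all $n$ blue apices and dividing by $2$, since every balanced tetrahedron has exactly two blue vertices and is therefore counted at most twice, will yield the $\Omega(n^{5/2})$ lower bound.

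\textbf{Construction.} Fix a blue $p\in S$. A standard rotation/continuity argument produces a plane $\Pi$ through $p$, disjoint from $S\setminus\{p\}$, whose two open half--spaces each contain at least $\lfloor n/2\rfloor$ red points; in particular each side contains between $\lfloor n/2\rfloor$ and $\lceil n/2\rceil$ reds. By pigeonhole on the $n-1$ remaining blues, one side, call it the upper side, also contains at least $\lfloor(n-1)/2\rfloor$ blue points. Fix a plane $\Pi'$ parallel to $\Pi$ and strictly above $p$. For each $q\in S$ above $\Pi$ let $q'$ be the intersection of $\Pi'$ with the ray from $p$ through $q$, colour $q'$ as $q$, and call $S''$ the resulting planar set. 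General position makes this projection injective on $S\setminus\{p\}$, and the numerics just given imply that the blue count of $S''$ is at least the red count minus one, so Lemma~\ref{lem:triangles} applies and produces $\Omega(n^{3/2})$ empty red $(2,1)$--triangles in $\Pi'$.

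\textbf{Lifting and the main obstacle.} For each such empty triangle $T=q_1'q_2'q_3'$ (with $q_1',q_2'$ red and $q_3'$ blue), define $\tau:=\operatorname{Conv}\{p,q_1,q_2,q_3\}$, where $q_i\in S$ is the preimage of $q_i'$. Then $\tau$ has two red and two blue vertices ($p$ and $q_3$), so it is balanced. The step that carries the geometric content is verifying that $\tau$ is empty: $\tau$ lies in the closed upper half--space of $\Pi$ (its apex $p$ sits on $\Pi$ and all its other vertices are above $\Pi$), so any point of $S$ in $\operatorname{int}(\tau)$ must lie strictly above $\Pi$; moreover $\tau$ coincides with the portion of the cone from $p$ over the triangle $q_1q_2q_3$ cut off at that triangle, and this cone equals the cone from $p$ over $T$. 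Any $q\in S\cap\operatorname{int}(\tau)$ would therefore project to a point of $S''$ in $\operatorname{int}(T)$, contradicting the emptiness of $T$. Running the construction for each of the $n$ blue apices and applying the factor--of--two correction completes the proof. The only non--routine step is precisely this lifting of emptiness, which requires the two geometric observations above (containment of $\tau$ in one closed half--space and its description as a truncated cone over the base triangle) to rule out intruding points either above $\Pi'$ or below $\Pi$.
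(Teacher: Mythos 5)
Your proposal does not address the statement at all. The statement is the paper's closing \emph{conjecture}: that a balanced planar point set (in $\mathbb{R}^2$) spans a quadratic number of red $(2,1)$-triangles. What you have written is instead a proof of the paper's main \emph{theorem} in $\mathbb{R}^3$ --- the $\Omega(n^{5/2})$ bound on empty balanced tetrahedra --- obtained by fixing a blue apex $p$, splitting $S$ by a plane $\Pi$ through $p$, projecting the points above $\Pi$ onto a parallel plane $\Pi'$, and lifting the empty red $(2,1)$-triangles supplied by Lemma~\ref{lem:triangles}. That argument mirrors the paper's proof of its theorem essentially step by step (and your cone-over-the-base-triangle justification of the emptiness lift is correct), but it proves a different statement in a different ambient dimension.

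As for the conjecture itself, it is left open in the paper: the authors offer no proof, only the remark that the number of red $(2,1)$-triangles \emph{plus} the number of blue $(2,1)$-triangles is quadratic (by translating a line through a red--blue pair until it first hits another point), which says nothing about the red ones alone. Nothing in your proposal, nor in the paper's machinery, yields a quadratic bound for a single color class: Lemma~\ref{lem:triangles} gives only $\Omega(n^{3/2})$ in the plane, and your projection argument \emph{consumes} that lemma as input rather than improving it. So the gap is fundamental --- you proved the wrong statement, and the stated one remains an open problem.
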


We observe that it is easy to see that number of red $(2,1)$-triangles plus the number of
blue $(2,1)$-triangles is quadratic. Any line passing trough a blue and a red point of $S$, when
moved up or down generates a red or a blue triangle when it hits first a point in $S$.

\textbf{Acknowlodegments}
This work was initiated at the VII Spanish Workshop on Geometric Optimization, El Roc\'io,
Huelva, Spain, held June 20-24, 2016. We thank the other
participants of that workshop---
Sergey Bereg, Kirk Boyer,
Evaristo Caraballo,
David Flores,
Marco A. Heredia,
Paul Horn,
Nadine Kroher,
Mario A. L\'opez,
Pablo P\'erez-Lantero,
Adriana Ramírez Vigueras,
Max Rotschke and
Inmaculada Ventura---for helpful discussions
and contributing to a fun and creative atmosphere.

\small \bibliographystyle{alpha} \bibliography{balanced}

\newcommand{\etalchar}[1]{$^{#1}$}
\begin{thebibliography}{BDBnFM{\etalchar{+}}15}

\bibitem[AFMFPn{\etalchar{+}}09]{mono_us}
Oswin Aichholzer, Ruy Fabila-Monroy, David Flores-Pe\~{n}aloza, Thomas Hackl,
  Clemens Huemer, and Jorge Urrutia.
\newblock Empty monochromatic triangles.
\newblock {\em Comput. Geom.}, 42(9):934--938, 2009.

\bibitem[AFMH{\etalchar{+}}14]{empty_simp}
Oswin Aichholzer, Ruy Fabila-Monroy, Thomas Hackl, Clemens Huemer, and Jorge
  Urrutia.
\newblock Empty monochromatic simplices.
\newblock {\em Discrete Comput. Geom.}, 51(2):362--393, 2014.

\bibitem[AHH{\etalchar{+}}10]{non_convex_quad}
Oswin Aichholzer, Thomas Hackl, Clemens Huemer, Ferran Hurtado, and Birgit
  Vogtenhuber.
\newblock Large bichromatic point sets admit empty monochromatic 4-gons.
\newblock {\em SIAM J. Discrete Math.}, 23(4):2147--2155, 2009/10.

\bibitem[AUV13]{balanced_6_holes}
Oswin Aichholzer, Jorge Urrutia, and Birgit Vogtenhuber.
\newblock Balanced 6-holes in linearly separable bichromatic point sets.
\newblock {\em Electronic Notes in Discrete Mathematics}, 44:181 -- 186, 2013.

\bibitem[BBBD16]{almost}
Deepan Basu, Kinjal Basu, Bhaswar~B. Bhattacharya, and Sandip Das.
\newblock Almost empty monochromatic triangles in planar point sets.
\newblock {\em Discrete Appl. Math.}, 210:207--213, 2016.

\bibitem[BDBnFM{\etalchar{+}}15]{balanced_4_holes}
S.~Bereg, J.~M. D\'{\i}az-B\'{a}\~{n}ez, R.~Fabila-Monroy,
  P.~P\'{e}rez-Lantero, A.~Ram\'{\i}rez-Vigueras, T.~Sakai, J.~Urrutia, and
  I.~Ventura.
\newblock On balanced 4-holes in bichromatic point sets.
\newblock {\em Comput. Geom.}, 48(3):169--179, 2015.

\bibitem[Bra04]{brass_4_gon}
P.~Brass.
\newblock Empty monochromatic fourgons in two-colored point sets.
\newblock {\em Geombinatorics}, 14(2):5--7, 2004.

\bibitem[CLGMSU19]{onalmost}
Jorge Cravioto-Lagos, Alejandro~Corinto Gonz\'{a}lez-Mart\'{\i}nez, Toshinori
  Sakai, and Jorge Urrutia.
\newblock On almost empty monochromatic triangles and convex quadrilaterals in
  colored point sets.
\newblock {\em Graphs Combin.}, 35(6):1475--1493, 2019.

\bibitem[DHKS03]{variants}
Olivier Devillers, Ferran Hurtado, Gyula K\'{a}rolyi, and Carlos Seara.
\newblock Chromatic variants of the {E}rd{\H{o}}s-{S}zekeres theorem on points
  in convex position.
\newblock {\em Comput. Geom.}, 26(3):193--208, 2003.

\bibitem[Erd78]{erdos1978some}
Paul Erd\"os.
\newblock Some more problems on elementary geometry.
\newblock {\em Austral. Math. Soc. Gaz}, 5(2):52--54, 1978.

\bibitem[ES35]{ErdosSek}
Paul Erd\"os and G.~Szekeres.
\newblock A combinatorial problem in geometry.
\newblock {\em Compositio Mathematica}, 2:463--470, 1935.

\bibitem[Fri04]{friedman_4_gon}
E.~Friedman.
\newblock 30 two-colored points with no empty monochromatic convex fourgons.
\newblock {\em Geombinatorics}, 14(2):53--54, 2004.

\bibitem[Ger08]{gerken2008empty}
Tobias Gerken.
\newblock Empty convex hexagons in planar point sets.
\newblock {\em Discrete \& Computational Geometry}, 39(1-3):239--272, 2008.

\bibitem[HS09]{clemens_4_gon}
Clemens Huemer and Carlos Seara.
\newblock 36 two-colored points with no empty monochromatic convex fourgons.
\newblock {\em Geombinatorics}, 19(1):5--6, 2009.

\bibitem[LZ18]{almost_quad}
L.~Liu and Y.~Zhang.
\newblock Almost empty monochromatic quadrilaterals in planar point sets.
\newblock {\em Math. Notes}, 103(3-4):415--429, 2018.

\bibitem[Nic07]{nicolas2007empty}
Carlos~M Nicolas.
\newblock The empty hexagon theorem.
\newblock {\em Discrete \& Computational Geometry}, 38(2):389--397, 2007.

\bibitem[PT13]{mono_pach}
J\'{a}nos Pach and G\'{e}za T\'{o}th.
\newblock Monochromatic empty triangles in two-colored point sets.
\newblock {\em Discrete Appl. Math.}, 161(9):1259--1261, 2013.

\bibitem[vG05]{gulik_4_gon}
R.~van Gulik.
\newblock 32 two-colored points with no empty monochromatic convex fourgons.
\newblock {\em Geombinatorics}, 15(1):32--33, 2005.

\end{thebibliography}

\end{document}